\newtheorem{theorem}{Theorem}
\newtheorem{remark}{Remark}
\newtheorem{corollary}{Corollary}
\newtheorem{definition}{Definition}
\newtheorem{lemma}{Lemma}
\newtheorem{assumption}{\textbf{Assumption}}
\newtheorem{example}{\indent Example}
\def\and{\textrm{ and}}
\newcommand{\Sj}[1]{S^{(#1)}(\prtition_#1,\alpha_{#1})}
\newcommand{\hj}[1]{h^{(#1)}(x,\prtition_#1,\alpha_{#1})}
\newcommand{\Domain}{\mathcal{D}}
\newcommand{\PWA}{\mathrm{PWA}}
\newcommand{\genpwaparamM}{A}
\newcommand{\genpwaparamV}{a}
\newcommand{\Mode}{I}
\newcommand{\state}{x}
\newcommand{\pState}{X}
\newcommand{\Vertex}{\mathcal{F}_0}
\newcommand{\prtition}{\mathcal{P}}
\newcommand{\qState}{Z}
\newcommand{\vMatrix}{E}
\newcommand{\vVec}{e}
\newcommand{\Rprtition}{\mathcal R}
\newcommand{\Pindex}{\Mode(\prtition)}
\newcommand{\Int}[1]{\mathrm{Int}\left(#1 \right)}
\author{Pouya Samanipour and Hasan A. Poonawala
\thanks{Pouya Samanipour and Hasan A. Poonawala are with the Department of Mechanical and Aerospace Engineering, University of Kentucky, Lexington, USA
{\tt\small\{samanipour.pouya,hasan.poonawala\}@uky.edu}. The corresponding author is Hasan A.Poonawala. 
This work is supported by the Department of Mechanical Engineering at the University of Kentucky.}}
\begin{document}

\title{Replacing $\mathcal{K}_\infty$ Function with Leaky ReLU in Barrier Function Design: A Union of Invariant Sets Approach for ReLU-Based Dynamical Systems}
\maketitle
\begin{abstract}
In this paper, a systematic framework is presented for determining piecewise affine ($\PWA$) barrier functions and their corresponding invariant sets for dynamical systems identified via Rectified Linear Unit (ReLU) neural networks or their equivalent $\PWA$ representations. A common approach to determining the invariant set is to use Nagumo's condition, or to utilize the barrier function with a class-$\mathcal{K}_\infty$ function. It may be challenging to find a suitable class-$\mathcal{K}_\infty$ function in some cases.
We propose leaky ReLU as an efficient substitute for the complex nonlinear $\mathcal{K}_\infty$ function in our formulation. Moreover,
we propose the Union of Invariant Sets (UIS) method, which combines information from multiple invariant sets in order to compute the largest possible $\PWA$ invariant set. 
The proposed framework is validated through multiple examples, showcasing its potential to enhance the analysis of invariant sets in ReLU-based dynamical systems. Our code is available at: \url{https://github.com/PouyaSamanipour/UIS.git}.
\end{abstract}

\section{Introduction}
The Rectified Linear Unit (ReLU) neural network (NN) is capable of identifying complex system dynamics, and ReLU-based neural networks have been successfully applied as NN-controllers for a variety of challenging control applications such as robot manipulation, autonomous driving~\cite{tan2018sim,rober2023backward}. In spite of their effectiveness, standard NN training lacks the safety and convergence guarantees provided by classical control methods. Moreover, ReLU-based controllers are inherently sensitive to small input perturbations, which can lead to unexpected and potentially unsafe behavior~\cite{yuan2019adversarial}. These limitations highlight the need for computational tools to ensure safety and reliability for ReLU NNs, particularly for applications in safety-critical domains~\cite{chang2019neural,dai2021lyapunov,dawson2022safe}.

Barrier functions (BFs) are a widely used approach for automatically guaranteeing safety in dynamical systems~\cite{ames2014control}. As safety filters in control systems, control barrier functions(CBF) are commonly used, but their effectiveness depends heavily on their robustness to model uncertainties~\cite{HAMDIPOOR2023100840}. The challenge of dynamical model uncertainty has been addressed by various sampling-based methods such as supervised learning, reinforcement learning (RL), and integration of CBFs in order to establish forward invariance sets and ensure the safety of uncertain systems~\cite{marvi2021safe,taylor2020learning}. Despite their success, in most applications it may be difficult to verify the learned safe set~\cite{dai2021lyapunov}. 

The piecewise affine ($\PWA$) nature of ReLU neural networks enables their representation as $\PWA$ functions~\cite{10108069}. The computation of invariant sets for both continuous and discrete-time $\PWA$ dynamical systems has been extensively investigated in the literature, with significant contributions addressing constrained systems and their invariant properties~\cite{blanchini1996constrained, rakovic2004invariant}. In~\cite{samanipour2024invariant}, we introduced a method to estimate invariant sets and construct $\PWA$ BFs for $\PWA$ dynamical systems. The BF found depends on the choice of the $\mathcal{K}_{\infty}$ function, $\alpha(x)$, which is restricted to be a linear function $\alpha(x) = \alpha x$ in~\cite{samanipour2024invariant}. Furthermore, their approach relies on trying different linear coefficients $\alpha$ and selecting the one that provides the largest possible invariant set. 
While trying various linear $\alpha(x)$ to obtain the largest possible set of invariants might be computationally inexpensive in lower dimensions, it might prove to be computationally demanding in higher-dimensional systems.

In the present work, we address the challenge of selecting an appropriate $\mathcal{K}_{\infty}$ function for barrier-based safety analysis. We propose using a Leaky ReLU function instead of a complex or highly nonlinear function for $\alpha(\cdot)$. This choice offers two key benefits: simplicity in form and a seamless path towards incorporating the non-smooth BF framework described in~\cite{glotfelter2017nonsmooth}. The non-smooth BF~\cite{glotfelter2017nonsmooth} utilizes multiple valid BFs to construct a new \emph{candidate} BF; the validity of this combined BF depends on the existence of an appropriate $\alpha(x)$ that satisfies essential safety conditions for these multiple BFs simultaneously. No method to find such $\alpha(x)$ is provided. By using the Leaky ReLU function as $\alpha(x)$, we are able to find such an $\alpha(x)$, so that our combined barrier function is valid. 
Building on this insight, we introduce a new method, the Union of Invariant Sets (UIS). The UIS method combines several $\PWA$ BFs derived from various linear functions $\alpha(x)$ for ReLU-based dynamical systems into a single  $\PWA$ BF. This integration may expand the overall invariant set while preserving safety guarantees and incurring no additional computational cost compared to~\cite{samanipour2024invariant}. 
\section{Preliminaries}
To begin, we briefly introduce the necessary notations and provide an overview of piecewise affine $\PWA$ functions.

\paragraph*{Notation}
Let $S$ be a set. The set of indices corresponding to the elements of $S$ is denoted by $\Mode(S)$. The convex hull, interior, and boundary $S$ are denoted by $\mathrm{conv}(S)$, $\mathrm{Int}(S)$, $\partial S$ respectively.
For a matrix $A$, $A^T$ denotes its transpose. 
As part of this paper, we present a definition of $\PWA$ dynamical systems on a partition. Consequently, the partition is defined as follows:
\begin{definition}
Throughout this paper, the partition $\mathcal{P}$ is a collection of subsets $\{\pState_i \}_{i \in \Pindex}$, where each $\pState_i$ represents a closed subset of $\mathbb{R}^n$ for all $i\in \Pindex$. In the partition $\mathcal{P}$, $\text{Dom}(\mathcal{P})=\cup_{i\in \Pindex} \pState_i$ and $\mathrm{Int}(\pState_i) \cap  \mathrm{Int}(\pState_j) = \emptyset$ for $i\neq j$.
\end{definition}

The other concept we need is the refinement of a partition. In mathematical terms, given two partitions $\prtition = \{Y_i\}_{i \in I}$ and $\Rprtition = \{ \qState_j\}_{j \in J}$ of a set $S = \mathrm{Dom}(\prtition) = \mathrm{Dom}(\Rprtition)$, we say that $\Rprtition$ is a refinement of $\prtition$ if $\qState_j \cap Y_i \neq \emptyset$ implies $\qState_j \subseteq Y_i$.

Furthermore, we use $ \dim(X_i)$ to denote the dimension of a cell $X_i$, where $i \in I(\prtition)$.
\subsection{Piecewise Affine Functions}
\label{sec:pwafun}
A piecewise affine function, denoted by $\PWA(x)$, is represented via an explicit parameterization based on a partition $\mathcal{P} = \{\pState_i\}_{i \in \Pindex}$. Each region in the partition corresponds to a set of affine dynamics, described by a collection of matrices $\mathbf{A}_\prtition = \{\genpwaparamM_i \}_{i \in \Pindex}$ and vectors $\mathbf{a}_\prtition = \{\genpwaparamV_i \}_{i \in \Pindex}$. The $\PWA$ function is defined as follows:
\begin{align}\label{eq:definePWAfun}
\PWA(x) &= \genpwaparamM_i \state + \genpwaparamV_i, \quad \text{if } \state \in \pState_i, 
\end{align}
where the region (cell) $\pState_i$ is described by
\begin{align}\label{eq:H rep}
  \pState_i &= \{x \in \mathcal{R}^n \colon \vMatrix_i \state + \vVec_i  \succeq 0\}, 
\end{align}
with matrices $E_i \in \mathbb{R}^{n_{hi} \times n}$ and vectors $e_i \in \mathbb{R}^{n_{hi}}$, defining the boundary hyperplanes of $\pState_i$. Here, $n_{h_i}$ denotes the number of hyperplanes in cell $X_i$. 
\begin{assumption}\label{ass:assumption bounded polytopes}
In this work, we assume that all partition cells are bounded polytopes. 
Consequently, the vertex representation of the $\PWA$ dynamics is valid. 
Specifically, each cell $X_i$ in the partition can be represented as the convex hull of its set of vertices, $\Vertex(X_i)$, as follows:
\begin{equation}
X_i = \mathrm{conv}\{\Vertex(X_i)\}.
\end{equation}
\end{assumption}
\section{Invariant Set Estimation}
To describe the invariant set estimation algorithm, we must first define the concept of the set of invariance.
\subsection{Forward Invariance}
To explain the concepts of forward invariant, consider the following nonlinear dynamics:
\begin{equation}\label{eq:general nl}
\dot{x} = f_{cl}(x).
\end{equation}
Consider $f_{cl}$ as locally Lipschitz continuous within the domain $\Domain \subseteq \mathbb{R}^n$. Based on this assumption, for any initial condition $x_0 \in \Domain$, there is a time interval $I(x_0) = [0, \tau_{\text{max}})$ within which a unique solution $x(t)$ to~\eqref{eq:general nl} exists, fulfilling the differential equation~\eqref{eq:general nl} and the initial condition $x_0$~\cite{ames2019control}. 
\begin{definition}[Forward invariant set~\cite{ames2019control}]
Let us define a super-level set $S$ corresponding to a continuously differentiable function $h:\Domain \subset \mathbb{R}^n \rightarrow \mathbb{R}$ for the closed-loop system $f_{cl}$ given by equation~\eqref{eq:general nl} as follows:
\begin{align}\label{eq:BF Domain}
S =& \{x \in \Domain : h(x) \geq 0\},
\end{align}
where set $S$ is considered forward-invariant if the solution $x(t)$ remains in $S$ for all $t \in I(x_0)$ for every $x_0$. 
\end{definition}
\begin{definition}[Barrier function~\cite{ames2019control}]\label{Def: Barrier function}
Let $S \subset \Domain \subseteq \mathbb{R}^n$ represent the superlevel set of a continuously differentiable function $h(x)$. It is said that $h(x)$ is a BF if there exists an extended class $\mathcal{K}_\infty$ function $\alpha(x)$ in which:
\begin{equation}\label{eq:barrier general}
L_{f_{cl}} h(x) \geq -\alpha(h(x)), \quad \text{for all } x \in D,
\end{equation}
where lie derivative of $h(x)$ along the closed loop dynamics $f_{cl}$ is denoted by $L_{f_{cl}} h(x)$. 
Equation~\eqref {eq:barrier general} makes set $S$ an asymptotically set in $\Domain$.
\end{definition}
A key challenge in finding the BF using~\eqref{eq:barrier general} is choosing the $\mathcal{K}_\infty$ function $\alpha(x)$. A method for selecting $\alpha(x)$ will be discussed in the following section.
\subsection{Leaky ReLU: A practical choice for \texorpdfstring{$\alpha(x)$}{alpha(x)} in BF design}
Definition~\ref{Def: Barrier function} shows that if there exists an $\alpha(x)$ satisfies constraint~\eqref{eq:barrier general}, then $h(x)$ is a BF, and $S$ remains forward invariant and asymptotically stable in $\Domain$. Finding a suitable nonlinear $\alpha(x)$ is often challenging and computationally expensive because it often involves searching through a large function space. It is common practice to simplify the analysis using a linear $\alpha(x)$,~\cite{samanipour2024invariant}. However, this methodology may impose limitations on the search for a larger invariant set. 

To address these limitations, we propose a Leaky ReLU function as a practical and flexible substitute. 
Unlike approaches that rely on complex $\mathcal{K}_\infty$ functions or trial-and-error selection of linear $\alpha(x)$, the Leaky ReLU function provides an efficient alternative with only a few tunable parameters, enabling more effective barrier function construction for dynamics~\eqref{eq:general nl}.

\begin{theorem}\label{th:leaky}
Let $S \subset \Domain \subseteq \mathbb{R}^n$ represent the super-level set, as defined in Equation~\eqref{eq:BF Domain}, of a continuously differentiable function $h(x)$ with respect to the closed-loop dynamic~\eqref{eq:general nl}.
If $h(x)$ is a bounded function such that $h_{\min} \leq h(x) \leq h_{\max}$ for $x \in \Domain$, then the followings are equivalent:
\begin{enumerate}
    \item[(a)] There exists an extended class $\mathcal{K}_\infty$ function $\alpha(x)$ such that the barrier constraint~\eqref{eq:barrier general} is satisfied, rendering $S$ invariant (i.e., $h(x)$ is a valid BF).
   \item[(b)] There exists an extended class $\mathcal{K}_\infty$ function  $\overline{\alpha}(x) = \alpha_m \sigma_{\left(\frac{\alpha_1}{\alpha_m}\right)}(x)$ with constants $0 < \alpha_1 \leq \alpha_m$, where $\sigma_{\left(\frac{\alpha_1}{\alpha_m}\right)}$ is a Leaky ReLU function defined as 
    \begin{equation}\label{eq:Leaky relu}
        \sigma_{\left(\frac{\alpha_1}{\alpha_m}\right)}(x) =
        \begin{cases} 
            x, & \text{if } x \geq 0, \\
            \left(\frac{\alpha_1}{\alpha_m}\right)x, & \text{if } x < 0,
        \end{cases}
    \end{equation}
    such that condition~\eqref{eq:barrier general} is satisfied,rendering $S$ invariant (i.e., $h(x)$ is a valid BF). 
\end{enumerate}
\end{theorem}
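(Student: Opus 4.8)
The plan is to prove the two implications of the equivalence separately, with (b)$\Rightarrow$(a) essentially free and (a)$\Rightarrow$(b) carrying the content. For (b)$\Rightarrow$(a) I would simply observe that the Leaky ReLU comparison function is itself an extended class-$\mathcal{K}_\infty$ function: from~\eqref{eq:Leaky relu}, $\overline{\alpha}(t)=\alpha_m\,\sigma_{(\alpha_1/\alpha_m)}(t)$ equals $\alpha_m t$ for $t\ge 0$ and $\alpha_1 t$ for $t<0$, so with $0<\alpha_1\le\alpha_m$ it is continuous, strictly increasing (both branches have positive slope and agree at the origin, where the value is $0$), and tends to $\pm\infty$ as $t\to\pm\infty$. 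Hence the $\overline{\alpha}$ furnished by (b) is already an admissible witness $\alpha$ in (a).

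For (a)$\Rightarrow$(b), the key remark is that because $h$ takes values only in $[h_{\min},h_{\max}]$, the barrier inequality~\eqref{eq:barrier general} constrains the comparison function only on that compact interval, and in fact only through the ``tightest requirement'' $\phi(t):=\inf\{\,L_{f_{cl}}h(x):x\in\Domain,\ h(x)=t\,\}$: an extended class-$\mathcal{K}_\infty$ function $\beta$ satisfies~\eqref{eq:barrier general} if and only if $\beta(t)\ge-\phi(t)$ for every $t$ in the range of $h$. Assumption (a) therefore says $\alpha(t)\ge-\phi(t)$ on that range, and in particular $-\phi(0)\le\alpha(0)=0$ (the Nagumo condition), and $\phi(t)>0$ whenever $t<0$. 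I would then choose the two slopes of $\overline{\alpha}$ about the breakpoint $t=0$: take $\alpha_m$ at least $\sup_{t>0}\bigl(-\phi(t)\bigr)/t$ so that the ray $t\mapsto\alpha_m t$ stays above $-\phi$ on $[0,h_{\max}]$, and take $\alpha_1\in(0,\alpha_m]$ at most $\inf_{t<0}\phi(t)/|t|$ so that $\alpha_1 t\ge-\phi(t)$ on $[h_{\min},0)$ as well. With $0<\alpha_1\le\alpha_m$, the resulting $\overline{\alpha}$ then satisfies $-\overline{\alpha}(h(x))\le\phi(h(x))\le L_{f_{cl}}h(x)$ for all $x\in\Domain$, i.e.\ (b) holds, and $\overline{\alpha}$ is extended class-$\mathcal{K}_\infty$ by the previous paragraph.

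The step I expect to be the main obstacle is justifying that these slope choices are legitimate, i.e.\ that $\sup_{t>0}\bigl(-\phi(t)\bigr)/t$ is finite and $\inf_{t<0}\phi(t)/|t|$ is strictly positive. This is precisely a statement about the behaviour of $-\phi$ near the boundary value $t=0$: a tightest requirement that is super-linear at the origin (growing faster than any line through $0$ on the positive side, or flatter than any such line on the negative side) could not be captured by a Leaky ReLU. Boundedness of $h$ alone does not rule this out, so here the structure of the problem must enter: under Assumption~\ref{ass:assumption bounded polytopes} and the $\PWA$ setting of the paper, $L_{f_{cl}}h$ is piecewise affine over finitely many bounded polytopes, so $-\phi$ is a bounded piecewise-affine function of $t$ on $[h_{\min},h_{\max}]$; on the single affine piece adjacent to $t=0$ on each side, the inequalities $-\phi(0)\le 0$ and $\phi(t)>0$ for $t<0$ then force $-\phi(t)$ to be at most linear in $t$ near the origin, making the supremum finite and the infimum positive. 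Making this dichotomy rigorous — in particular handling the (one-sided semi)continuity of $\phi$ at the breakpoints where the underlying partition changes cells — is the part I would spend the most effort on; the rest is bookkeeping.
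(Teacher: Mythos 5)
Your (b)$\Rightarrow$(a) direction is correct and identical to the paper's. For (a)$\Rightarrow$(b), your reduction to the ``tightest requirement'' $\phi(t)=\inf\{L_{f_{cl}}h(x):h(x)=t\}$ is a cleaner packaging of what the paper does pointwise (the paper bounds the ratio $\alpha(h(x))/h(x)$ directly), and you have put your finger on exactly the step that carries all the content: whether $\sup_{t>0}(-\phi(t))/t$ is finite and $\inf_{t<0}\phi(t)/|t|$ is strictly positive, equivalently whether the certifying $\alpha$ from (a) satisfies $\limsup_{t\to 0^+}\alpha(t)/t<\infty$ and $\liminf_{t\to 0^-}\alpha(t)/t>0$. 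The paper's proof simply asserts these bounds from the finiteness of numerator and denominator, which is not a valid inference as $h(x)\to 0$ (and ``bounded below'' in its Case~2 would in any event not yield the required \emph{positive} $\alpha_1$). So you have correctly diagnosed the gap; the problem is that your proposal does not close it, and under the stated hypotheses it cannot be closed. Counterexample: $n=1$, $\Domain=[-1,1]$, $f_{cl}(x)=-x^3$, $h(x)=x$, so $S=[0,1]$ is forward invariant and (a) holds with the extended class-$\mathcal{K}_\infty$ function $\alpha(t)=t^3$ (equality in~\eqref{eq:barrier general}); yet for any $\alpha_1>0$ the leaky-ReLU condition on the negative side requires $\alpha_1 x\ge x^3$, i.e.\ $\alpha_1\le x^2$, which fails for $x\in(-\sqrt{\alpha_1},0)$. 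Hence (a) does not imply (b) for general $C^1$ $h$ and locally Lipschitz $f_{cl}$, and Theorem~\ref{th:leaky} as stated (and the paper's proof of it) is incorrect, not merely incomplete.

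Your proposed repair --- invoking Assumption~\ref{ass:assumption bounded polytopes} and the $\PWA$ structure so that $-\phi$ is piecewise affine near $t=0$ --- is the right instinct, but it imports hypotheses that are not in the theorem statement, which concerns the general system~\eqref{eq:general nl} and a continuously differentiable $h$; and even in the $\PWA$ setting the claims that $\phi$ is piecewise affine in $t$ and suitably one-sided continuous at $t=0$ are exactly the parts you defer, so the argument is not yet a proof there either. The statement does become true, by your argument or the paper's, if one adds a hypothesis forcing $\alpha$ to be linearly bounded near the origin (e.g.\ $\alpha$ locally Lipschitz at $0$ with $\alpha(t)\le Ct$ for small $t>0$ and $\alpha(t)\le c\,t$, $c>0$, for small $t<0$), or if, as in the paper's actual use of the theorem, condition (a) is only ever certified with a linear $\alpha$. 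You should either prove the theorem under such a strengthened hypothesis or present the counterexample; as written, your attempt and the paper's proof founder on the same step.
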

\begin{proof}
The Leaky ReLU function $\sigma_{\left(\frac{\alpha_1}{\alpha_m}\right)}$ in~\eqref{eq:Leaky relu} is continuous, strictly increasing, and satisfies $\sigma_{\left(\frac{\alpha_1}{\alpha_m}\right)}(0)=0$. Thus, if \textit{(b)} holds, it immediately implies \textit{(a)}.

Now, suppose (a) is true. Since $h(x)$ is bounded on $\Domain$ with $h_{\min} \le h(x) \le h_{\max}$, and $\alpha(x)$ is continuous, the composition $\alpha\bigl(h(x)\bigr)$ is also bounded.

\textbf{Case 1:} $0 < h(x) \le h_{\max}$.  
Because $h(x)$ and $\alpha\bigl(h(x)\bigr)$ remain finite, the ratio $\frac{\alpha\bigl(h(x)\bigr)}{h(x)}$
is bounded above. Hence, there exists $\alpha_m > 0$ such that
\begin{equation}\label{eq:proof_upper}
  \alpha_m \;\ge\; \frac{\alpha\bigl(h(x)\bigr)}{h(x)},
  \quad
  \forall\, x \in \Domain \text{ with } 0 < h(x)\le h_{\max}.
\end{equation}
Adding $\dot{h}(x)$ to both sides of \eqref{eq:proof_upper} and using $\dot{h}(x) + \alpha\bigl(h(x)\bigr)\ge 0$, we obtain
\begin{align}
  \dot{h}(x) + \alpha_m\,h(x) 
  &\;\ge\; \dot{h}(x) + \alpha\bigl(h(x)\bigr)\geq 0 
\end{align}

\textbf{Case 2:} $h_{\min} \le h(x) < 0$.  
Similarly, $\frac{\alpha\bigl(h(x)\bigr)}{h(x)}$ is bounded below. Thus there exists $\alpha_1 > 0$ such that
\begin{equation}\label{eq:proof_lower}
  \alpha_1 \;\le\; \frac{\alpha\bigl(h(x)\bigr)}{h(x)},
  \quad
  \forall\, x \in \Domain \text{ with } h_{\min}\le h(x)<0.
\end{equation}
Again, adding $\dot{h}(x)$ to both sides of \eqref{eq:proof_lower} yields
\begin{align}
  \dot{h}(x) + \alpha_1\,h(x)
  &\;\ge\; \dot{h}(x) + \alpha\bigl(h(x)\bigr)\geq 0.
\end{align}

From these two cases, we see that we can define $\alpha(x)$ as described in~\eqref{eq:Leaky relu}. Therefore, if (a) holds, (b) must also hold. This completes the proof.
\end{proof}
Theorem~\ref{th:leaky} provides insight into how BFs can be constructed by tuning a Leaky ReLU with two parameters rather than looking for a complex $\mathcal{K}_\infty$ function. For less conservative behavior, $\alpha_m$ could be set to a large value and $\alpha_1$ to a small value. The boundedness of $h(x)$ imposes no restrictions if $\Domain$ is a compact set.
\subsection{\texorpdfstring{$\PWA$}{PWA} barrier function}
The main goal of this paper is to estimate the invariant set for the dynamical systems identified using ReLU NN or its equivalent $\PWA$ dynamical systems as follows:
\begin{equation}\label{eq:pwa dynamic}
    \dot x=\PWA(x),
\end{equation}
where the $\PWA$ function is defined on a pre-selected domain $\Domain$. In~\cite{samanipour2024invariant}, we proposed an approach to estimate the invariant set for dynamical systems~\eqref{eq:pwa dynamic} with linear $\alpha(x)$. However, using linear $\alpha(x)$ might be challenging. Before discussing how leaky ReLU $\alpha(x)$ can address this challenge, it is necessary to take a look at our work~\cite{samanipour2024invariant}.

Our work in~\cite{samanipour2024invariant} consists of parameterizing the BF as a $\PWA$ function on the same partition as the dynamical systems~\eqref{eq:pwa dynamic}. The BF for the cell $X_i$ can be parameterized as follows:
\begin{equation}\label{eq: BF PWA}
h_i(x)=s_i^Tx+t_i \quad \text{for} \quad i \in I(\prtition) 
\end{equation}
where $s_i\in \mathcal{R}^n$ and $t_i\in \mathcal{R}$.
$h_i(x)$ is a continuous and differentiable function within the interior of the cell. 
\begin{assumption}\label{assumption:derivate}
Let's consider a cell $X_j$ with local dynamic $\dot{x} = A_jx + a_j$ and a candidate BF $h_j(x) = s_j^T x + t_j$. Then the derivative of the BF along the dynamic solution at $\mathbf{x}'\in \Int{X_j}$ can be obtained as follows:
\begin{equation}\label{eq:BF derivative definition}
\dot{h_j}(\mathbf{x}') = s_j^T(A_jx'+a_j).
\end{equation}
For more details please see~\cite{10108069,samanipour2024invariant}.
\end{assumption}
 
In~\cite{samanipour2024invariant}, we constructed an optimization problem to search for a $\PWA$ invariant set over the domain $\Domain$. To describe the optimization problem, we must first establish useful index sets. Since our algorithm depends on the partition, we denote all these sets as functions of the partition.
We define the index set $I_{\partial \Domain}$ as follows:  
\begin{equation}
I_{\partial \Domain}(\prtition) = \{ i \in I \colon X_i \cap \partial \Domain \neq \emptyset \},
\end{equation}
where $I_{\partial \Domain}(\prtition)$ represents the indices of cells that have a non-empty intersection with the boundary of the $\Domain$.
To determine whether a vertex belongs to the $\partial \Domain$ or the $\Int{\Domain}$, we introduce the following sets:
\begin{align}
I_b(\prtition) &= \{ (i, k) \colon v_k \in \partial \Domain, \, i \in I_{\partial \Domain}, \, v_k \in \Vertex(X_i) \}, \\
I_{int}(\prtition) &= \{ (i, k) \colon v_k \notin \partial \Domain, \, i \in I(\prtition), \, v_k \in \Vertex(X_i) \}.\nonumber
\end{align}
Here, $I_b(\prtition)$ and $I_{\mathrm{Int}}(\prtition)$ are sets of ordered pairs where the first element corresponds to the cell index $i$, and the second element corresponds to the vertex index $k$. Specifically, $I_b(\prtition)$ identifies the vertices on the boundary of $\Domain$ and their associated cells, while $I_{\mathrm{Int}}(\prtition)$ identifies the vertices in the interior of the domain partition $\prtition$ and their associated cells.

We constructed a linear optimization in~\cite{samanipour2024invariant} as follows to find a certified invariant set with $\alpha(x)=\alpha x$ where $\alpha>0$. 
\begin{subequations}\label{eq:opt UIS}
\begin{align}
&\min_{ s_i, t_i,\tau_{\mathrm{Int}_{i}},\tau_{b_{i}}}  \quad  \sum_{i=1}^{M}\tau_{b_{i}}+\sum_{i=1}^{N}\tau_{\mathrm{Int}_{i}},\label{eq:cost_function}\\
&\text{Subject to:} \nonumber\\
&h_i(v_k)-\tau_{b_i}\leq -\epsilon_1, \quad \forall (i,k) \in I_{b}(\mathcal{P}),\label{eq:NB}\\
&h_i(v_k)+\tau_{\mathrm{Int}_i}\geq\epsilon_2, \quad \forall (i,k) \in I_{\mathrm{Int}}(\mathcal{P}),\label{eq:PI} \\
&\dot{h_i}(v_k)+\alpha h_i(v_k)\geq \epsilon_3, \quad \forall (i,k) \in I_{\Domain}(\mathcal{P}),\label{eq:Nagumo}\\
&h_i(v_k)=h_j(v_k), \quad \forall v_k \in \Vertex(X_i) \cap \Vertex(X_j),\label{eq:continuity}\\
&\tau_{b_i},\tau_{\mathrm{Int}_{i}}\geq 0,\label{eq:PS}
\end{align}    
\end{subequations}

where $\epsilon_1,\epsilon_2,\epsilon_3>0$. In~\eqref{eq:cost_function}, $M$ and $N$ denote the number of elements in $I_{b}(\mathcal{P})$ and $I_{\mathrm{Int}}(\mathcal{P})$, respectively. Constraint~\eqref{eq:NB} ensures that vertices in $I_{b}(\mathcal{P})$ are excluded from the invariant set, while constraint~\eqref{eq:PI} guides the search algorithm to include all points in $I_{\mathrm{Int}}(\mathcal{P})$. 

These constraints,~\eqref{eq:NB} and~\eqref{eq:PI}, are relaxed, as some vertices in $I_{\mathrm{Int}}(\mathcal{P})$ may not be contained within the invariant set, or the search algorithm may encounter feasibility issues for a given partition. Following the parametrization, constraint~\eqref{eq:Nagumo} is derived from the general constraint~\eqref{eq:barrier general}. The feasibility of this optimization problem is analyzed in~\cite{samanipour2024invariant}. For further details, please refer to~\cite{samanipour2024invariant}.
\begin{remark}
We must note that the optimization problem~\eqref{eq:opt UIS} is always feasible, however, its solution might not be a certified invariant set. In case the $\sum_{i=1}^{M}\tau_{b_i}\neq 0$ in the optimization problem~\eqref{eq:opt UIS}, the current partition does not justify a certified invariant set since certain constraints on the boundary cells,~\eqref{eq:NB}, have not been met. In order to resolve this issue, all boundary cells where the slack variable $ \tau_{b_i}$ is non-zero must be refined. Refinement should also be considered for interior cells with non-zero slack variables, $\tau_{\mathrm{Int}_i}$. 
In this work, we utilized the vector field refinement approach presented in~\cite{10313502}.    
\end{remark}
In~\cite{samanipour2024invariant}, we employed a bisection method to obtain a set of parameters $\underline{\alpha} \;=\; \{\alpha_1, \alpha_2, \dots, \alpha_m\},$
each yielding an associated invariant set. We then retained only the invariant set corresponding to the parameter $\alpha_k \in \underline{\alpha}$ that resulted in the largest invariant set, discarding all other sets obtained from the remaining parameters. Although this strategy was effective it is not computationally efficient.

As described earlier, Leaky ReLUs $\alpha(x)$ may address the challenge of using linear $\alpha(x)$ in~\cite{samanipour2024invariant}.
However, due to the constraint~\eqref{eq:Nagumo}, the resulting formulation requires mixed-integer programming, which results in high computational costs. Another potential solution is to take advantage of the concept of non-smooth BFs developed in~\cite{glotfelter2017nonsmooth}, where multiple invariant sets can be merged. The primary challenge in~\cite{glotfelter2017nonsmooth} is determining a suitable $\alpha(x)$ that simultaneously satisfies all the necessary barrier-function conditions, which can be difficult in practice. 
 In the next step, we will describe a new algorithm for estimating the invariant set for dynamical systems~\eqref{eq:pwa dynamic}.
\subsection{Union of Invariant Sets(UIS)}\label{sec:UIS}
This section addresses the challenge of finding suitable $\alpha(x)$ by integrating non-smooth BF with leaky ReLU $\alpha(x)$. This combination aims to produce a possibly larger invariant set compared to~\cite{samanipour2024invariant} without any additional computational cost. To demonstrate how non-smooth BFs can be used with the Leaky ReLU $\alpha(x)$, we must first modify our notation. 

As mentioned earlier, the solution to the optimization problem~\eqref{eq:opt UIS} depends on both $\mathcal{P}$ and $\alpha(x)$. Henceforth, we will use the notation $S^{(m)}(\mathcal{P}_m, \alpha_m)$ and $h^{(m)}(x, \mathcal{P}_m, \alpha_m)$ to reflect this dependency. The rest of the paper assumes $\alpha(x) = \alpha_m x$ in $h^{(m)}(x, \mathcal{P}_m, \alpha_m)$ unless otherwise stated. 
\begin{lemma}\label{lemma:UIS}
Consider the dynamical system given by~\eqref{eq:pwa dynamic} with an equilibrium at the origin, defined on a compact set $\Domain$.
Let $\underline{\alpha} = [\alpha_{min},\dots, \alpha_{max}]$ be a set of $m$ parameters ordered in ascending order. 
Suppose that the corresponding invariant sets, obtained through the optimization problem~\eqref{eq:opt UIS} are denoted by $\Sj{i}$ for $i=1,\dots,m$ with $\alpha(x)=\alpha_ix$.
Then, the following results hold:
\begin{enumerate}
    \item There exists an invariant set $\overline{S}$ with respect to the dynamical system~\eqref{eq:pwa dynamic}, where \( \overline{S} \) is given by:
    \begin{equation}\label{eq:union of of inv}
    \overline{S} = \bigcup_{i=1}^m \Sj{i}.    
    \end{equation}
    \item The set $\overline{S}$ is rendered asymptotically stable in $\Domain$ with the following BF and class-$\mathcal{K}_{\infty}$ function:
    \begin{equation}\label{eq:h UIS}
    \overline{h}(x, \overline{\prtition}, \overline{\alpha}) = \max_{i} \{ \hj{i} \},
    \end{equation}
    where $\overline{\alpha}(x)$ is defined as:
    \begin{align}
        \overline{\alpha}(x) = \alpha_{max} \sigma_{\left(\frac{\alpha_{min}}{\alpha_{max}}\right)}(x),
    \end{align}
    and $\sigma_{\left(\frac{\alpha_{min}}{\alpha_{max}}\right)}(x)$ denotes the Leaky ReLU function as defined in~\eqref{eq:Leaky relu}. The partition $\overline{\prtition}$ is the product partition defined as:
    \begin{equation}
        \overline{\mathcal{P}} = \mathcal{P}_1 \times \mathcal{P}_2 \times \dots \times \mathcal{P}_m, 
    \end{equation}
    where:
    \begin{equation}
     \prtition^*=\prtition_1\times\prtition_2=\{X_k\}_{k\in I(\prtition^*)},\nonumber   
    \end{equation}
    \begin{equation}
    X_k=\{X_i\cap X_j:i\in I(\prtition_1) ,j\in I(\prtition_2),dim(X_k)=n\}.\nonumber    
    \end{equation}
\end{enumerate}
\end{lemma}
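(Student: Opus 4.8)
The plan is to establish the two claims in sequence, leveraging the non-smooth barrier function machinery of~\cite{glotfelter2017nonsmooth} together with Theorem~\ref{th:leaky}. Throughout, each $\hj{i}$ is a valid BF for $\Sj{i}$ with the linear class-$\mathcal{K}_\infty$ function $\alpha_i x$, meaning $\dot{h}^{(i)}(x) + \alpha_i h^{(i)}(x) \geq 0$ holds on $\Domain$ in the appropriate (Clarke/generalized) sense dictated by Assumption~\ref{assumption:derivate}. The combined candidate is $\overline{h}(x) = \max_i \hj{i}$, and its superlevel set is exactly $\{x : \overline{h}(x) \geq 0\} = \bigcup_{i=1}^m \{x : \hj{i} \geq 0\} = \bigcup_{i=1}^m \Sj{i} = \overline{S}$, which gives the set-equality in~\eqref{eq:union of of inv} immediately.

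For part (1), I would invoke the result from~\cite{glotfelter2017nonsmooth} that the pointwise maximum of finitely many functions is itself (Clarke-)regular, and that at any point $x$ its generalized gradient / lower Dini derivative is controlled by the active functions $\mc{A}(x) = \{ i : \hj{i} = \overline{h}(x)\}$. The key step is a case analysis on the sign of $\overline{h}(x)$. On the set where $\overline{h}(x) \geq 0$ (i.e. on $\overline{S}$), for any active index $i \in \mc{A}(x)$ we have $\hj{i} = \overline{h}(x) \geq 0$, so $\dot{h}^{(i)}(x) \geq -\alpha_i \hj{i} \geq -\alpha_{\max}\,\overline{h}(x) = -\overline{\alpha}(\overline{h}(x))$, using $\sigma$ being the identity on the nonnegatives and $\alpha_i \leq \alpha_{\max}$. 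Taking the minimum over active indices (which is what governs the worst-case derivative of the max) preserves this inequality, so $\overline{h}$ satisfies the barrier condition~\eqref{eq:barrier general} on a neighborhood of $\partial\overline{S}$ with $\overline{\alpha}$, and Nagumo-type reasoning from~\cite{glotfelter2017nonsmooth} then yields forward invariance of $\overline{S}$. This proves (1).

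For part (2), I need the barrier inequality to hold on all of $\Domain$, not just on $\overline{S}$, which is why the Leaky ReLU slope $\alpha_{\min}/\alpha_{\max}$ enters. On the region where $\overline{h}(x) < 0$, the active index $i$ satisfies $\hj{i} = \overline{h}(x) < 0$, so $\dot{h}^{(i)}(x) \geq -\alpha_i \hj{i} = \alpha_i |\overline{h}(x)| \geq \alpha_{\min} |\overline{h}(x)|$, and since $\overline{\alpha}(\overline{h}(x)) = \alpha_{\max} \cdot (\alpha_{\min}/\alpha_{\max}) \overline{h}(x) = \alpha_{\min}\overline{h}(x) = -\alpha_{\min}|\overline{h}(x)|$, we again get $\dot{h}^{(i)}(x) \geq -\overline{\alpha}(\overline{h}(x))$; taking the min over active indices preserves it. Combining the two sign cases shows $\overline{h}$ together with the Leaky ReLU $\overline{\alpha}$ satisfies~\eqref{eq:barrier general} everywhere on $\Domain$, so by Definition~\ref{Def: Barrier function} (extended to the non-smooth setting via~\cite{glotfelter2017nonsmooth}) the set $\overline{S}$ is asymptotically stable in $\Domain$. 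The statement that $\overline{h}$ is $\PWA$ on the product partition $\overline{\prtition}$ follows since a maximum of finitely many $\PWA$ functions is $\PWA$, and on each full-dimensional cell $X_k = \bigcap_j X_{i_j}$ of the common refinement $\overline{\prtition} = \prtition_1 \times \cdots \times \prtition_m$ every $\hj{i}$ restricts to a single affine piece, so $\overline{h}$ does too; boundedness of $\overline{h}$ (needed to apply Theorem~\ref{th:leaky}'s perspective, and to place $\overline{\alpha}$ in class $\mathcal{K}_\infty$ on the relevant range) is automatic because $\Domain$ is compact and each $\hj{i}$ is continuous.

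The main obstacle I anticipate is making the derivative bookkeeping for $\overline{h} = \max_i \hj{i}$ rigorous at the non-differentiability locus — i.e. on the boundaries between cells of $\overline{\prtition}$ and where two or more $\hj{i}$ coincide — since there $\dot{\overline{h}}$ must be interpreted through Clarke's generalized gradient or the lower/upper Dini derivatives, and one must verify that the ReLU-PWA setup of~\cite{10108069,samanipour2024invariant} (Assumption~\ref{assumption:derivate}) actually delivers the needed inequality in that generalized sense rather than merely in cell interiors. I would handle this by citing the precise invariance lemma of~\cite{glotfelter2017nonsmooth} for finite maxima of regular barrier functions and checking that its hypotheses (common class-$\mathcal{K}_\infty$ function, here $\overline{\alpha}$; each $\hj{i}$ a valid BF on a neighborhood of $\overline{S}$) are met by the construction above; the sign-case computation then slots in as the verification of the "common $\overline{\alpha}$" hypothesis, which is exactly the gap that Theorem~\ref{th:leaky} was designed to fill.
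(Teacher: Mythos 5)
Your proposal is correct, and its core --- the sign-split verification that the active function's linear rate $\alpha_i$ is dominated by $\alpha_{max}$ where $\overline{h}\ge 0$ and dominates $\alpha_{min}$ where $\overline{h}<0$, so that the single Leaky ReLU $\overline{\alpha}$ certifies the barrier inequality for $\overline{h}=\max_i \hj{i}$ on all of $\Domain$ --- is exactly the computation in the paper's proof of part (2). The one genuine divergence is part (1): the paper establishes forward invariance of $\overline{S}$ by an elementary contradiction argument with no barrier machinery at all (any $x_0\in\overline{S}$ lies in some $\Sj{i}$, which is invariant on its own, so the trajectory stays in $\Sj{i}\subseteq\overline{S}$; a finite union of forward-invariant sets is forward invariant), whereas you route part (1) through the non-smooth barrier inequality and Nagumo-type reasoning. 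The paper's route is shorter and sidesteps all regularity questions for that claim; yours is not wrong but spends the harder argument where a one-line one suffices. Conversely, on part (2) you are more careful than the paper: the paper verifies $\dot{\overline{h}}+\overline{\alpha}(\overline{h})\ge 0$ only at points where $\overline{h}$ is differentiable (where $\dot{\overline{h}}=\dot{h}^{(k)}$ for the unique active $k$) and defers the entire non-differentiability locus to Proposition 2 of~\cite{glotfelter2017nonsmooth}, while you explicitly name the active-index set and the min-over-active-gradients quantity that the Clarke framework requires you to control --- which is precisely the hypothesis one must check before invoking that proposition, and your observation that the inequality holds for \emph{every} active index (not just one) is what makes that check go through.
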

\begin{proof}
The first part of the proof can be proved using the contradiction.
Assume, for contradiction, that there exists an initial condition $x_0 \in \overline{S}$ such that the trajectory of the system leaves $\overline{S}$ at some future time.

By definition of $\overline{S}$, for any $x_0 \in \overline{S}$, there exists $i \in \{1, 2, \dots, m\}$ such that $x_0 \in \Sj{i}$. Since $\Sj{i}$ is invariant by definition, any initial condition $x_0 \in \Sj{i}$ implies that the trajectory will remain within $\Sj{i}$ for all future time steps.

However, this contradicts our assumption that the trajectory leaves $\overline{S}$. Therefore, for any $x_0 \in \overline{S}$, the trajectory of the system will always remain within one of the sets $\Sj{i}$, which are invariant by construction.

To prove the second part, note that $\overline{h}(x, \overline{\mathcal{P}}, \overline{\alpha})$ is a continuous $\PWA$ BF, since all $\hj{i}$ for $i = 1, \dots, m$ are continuous and $\PWA$. To make the definition of this new $\PWA$ BF compatible with~\eqref{eq:definePWAfun}, we defined $\overline{\prtition}$ to ensure that in each cell, we have only one affine function $\overline{h}_i(x,\overline{\prtition},\overline{\alpha})=s_i^Tx+t_i$ for $x\in X_i(\overline{\prtition})$.
 
To prove that the candidate BF is valid, we first need to show that $\overline{h}_i(x,\overline{\prtition},\overline{\alpha})\geq 0$ for $x\in \overline{S}$ and $\overline{h}_i(x,\overline{\prtition},\overline{\alpha})< 0$ for $x \notin \overline{S}$. By definition of the $\overline{h}(x, \overline{\prtition} , \overline{\alpha}) $ in~\eqref{eq:h UIS}, we know
\begin{equation}\label{eq:proof pos}
    \overline{h}(x, \overline{\prtition}, \overline{\alpha})\geq \hj{i} \quad i=1,\dots,m,\hspace{0.2em} x\in \Domain. 
\end{equation} 
Moreover, by definition of $\overline{S}$, for any $x \in \overline{S}$, there exists $i \in \{1, 2, \dots, m\}$ such that $x \in \Sj{i}$ and as a result $\hj{i}\geq 0$. Therefore, with respect to~\eqref{eq:proof pos}, and due to the existence of $\hj{i} \geq 0$ for all $x \in \overline{S}$, we can conclude that $\overline{h}(x, \overline{\mathcal{P}}, \overline{\alpha}) \geq 0$ for $x \in \overline{S}$.
 Due to brevity, the proof of $\overline{h}(x,\overline{\prtition},\overline{\alpha})<0$ for $x\notin \overline{S}$ is eliminated.  

Now, we need to prove that condition~\eqref{eq:barrier general} holds for $\overline{h}(x,\overline{\prtition},\overline{\alpha})$ with respect to the dynamical system~\eqref{eq:pwa dynamic} for $x\in \overline{S}$. According to Theorem~\ref{th:leaky}, if the condition in~\eqref{eq:barrier general} holds for $\overline{h}(x,\overline{\prtition},\overline{\alpha})$ with an arbitrary $\overline{\alpha}(x)$, then $\overline{\alpha}(x)$ can be substituted with a Leaky ReLU function. Give that, let us assume $\overline{\alpha}(x)=\alpha_{max}\sigma_{(\frac{\alpha_{min}}{\alpha_{max}})}(x)$.
Let us consider $\mathbf{x} \in \Int{\overline{S}}$ such that $\overline{h}(\mathbf{x}, \overline{\mathcal{P}}, \overline{\alpha})$ is differentiable with respect to $\mathbf{x}$. To prove by contradiction, suppose equation~\eqref{eq:barrier general} does not hold for $\mathbf{x}$. By the definition of $\overline{h}(x,\overline{\prtition},\overline{\alpha})$ in~\eqref{eq:h UIS},   
$\overline{h}(\mathbf{x}, \overline{\mathcal{P}}, \overline{\alpha}) = h^{(k)}(\mathbf{x}, \mathcal{P}_k, \alpha_k) $ for $\mathbf{x}\in \Int{\overline{S}}$, where $k$  can be any value $1 \leq k \leq m$. Due to the fact that~\eqref{eq:barrier general} holds for $h^{(k)}(x, \mathcal{P}_k, \alpha_k)$ for $x \in \Domain$, we can conclude that:
\begin{equation}\label{eq:bf proof der}
\dot{h}^{(k)}(\mathbf{x}, \mathcal{P}_k, \alpha_k) + \alpha_k h^{(k)}(\mathbf{x}, \mathcal{P}_k, \alpha_k) \geq 0.    
\end{equation}
Also, the following holds because $\alpha_{max}\geq \alpha_k$ and $\overline{h}(\mathbf{x},\overline{\prtition},\overline{\alpha}))=h^{(k)}(\mathbf{x},\prtition_k,\alpha_k)> 0$ for $\mathbf{x}\in \Int{\overline{S}}$:
\begin{equation}\label{eq:am>ak}
    \alpha_{max} \overline{h}(\mathbf{x}, \overline{\prtition}, \overline{\alpha}) \geq \alpha_k h^{(k)}(\mathbf{x}, \mathcal{P}_k, \alpha_k).
\end{equation}
With respect to the assumption that $\mathbf{x}$ is a differentiable point we can conclude that $\dot{\overline{h}}(\mathbf{x}, \overline{\prtition}, \overline{\alpha})=\dot{h}^{(k)}(\mathbf{x}, \prtition_k, \alpha_k)$ and by adding it 
to both sides of inequality~\eqref{eq:am>ak}:
\begin{align}\label{eq:proof UIN interior}
    \dot{\overline{h}}(\mathbf{x}, \overline{\prtition}, \overline{\alpha}) +& \alpha_{max} \overline{h}(\mathbf{x}, \overline{\prtition}, \overline{\alpha})\\ &\geq \dot{h}^{(k)}(\mathbf{x}, \mathcal{P}_k, \alpha_k) + \alpha_k h^{(k)}(\mathbf{x}, \mathcal{P}_k, \alpha_k) \geq 0\nonumber.
\end{align}
which is a contradiction to the assumption that~\eqref{eq:barrier general} does not hold for $\overline{h}(\mathbf{x},\overline{\prtition},\overline{\alpha})$ for $\mathbf{x}\in \Int{\overline{S}}$. As a result, we can conclude for all the differentiable points in the $\Int{\overline{S}}$, $$\dot{\overline{h}}(x, \overline{\prtition}, \overline{\alpha}) + \overline{\alpha}\overline{h}(x, \overline{\prtition}, \overline{\alpha})\geq 0.$$

The same reasoning can be applied to points outside the invariant set $\overline{S}$. For differentiable points, such as $\mathbf{x}$, outside the invariant set, where $\overline{h}(\mathbf{x}, \overline{\mathcal{P}}, \overline{\alpha}) = h^{(k)}(\mathbf{x}, \mathcal{P}_k, \alpha_k) < 0$, let us assume, for contradiction, that~\eqref{eq:barrier general} does not hold. Since $\overline{h}(\mathbf{x}, \overline{\mathcal{P}}, \overline{\alpha}) < 0$, we have $\overline{\alpha}(\mathbf{x}) = \alpha_{min}$, where $\alpha_{min}\leq \alpha_k$ and $h(\mathbf{x}, \mathcal{P}_k, \alpha_k)< 0$. Therefore,
\begin{equation}
    \alpha_{min} \overline{h}(\mathbf{x}, \overline{\mathcal{P}}, \overline{\alpha}) \geq \alpha_k h^{(k)}(\mathbf{x}, \mathcal{P}_k, \alpha_k). \nonumber
\end{equation}
Adding \(\dot{\overline{h}}(\mathbf{x}, \overline{\mathcal{P}}, \overline{\alpha}) = \dot{h}^{(k)}(\mathbf{x}, \mathcal{P}_k, \alpha_k)\) to both sides of this inequality, we maintain the validity of~\eqref{eq:bf proof der}. Thus,
\begin{align}\label{eq:proof UIN exterior}
    \dot{\overline{h}}(\mathbf{x}, \overline{\mathcal{P}}, \overline{\alpha}) + \alpha_{min} \overline{h}(\mathbf{x}, \overline{\mathcal{P}}, \overline{\alpha}) \geq 0,
\end{align}
which contradicts our assumption. This allows us to conclude that equation~\eqref{eq:barrier general} holds for points outside the invariant set. 

The non-smooth BFs for differential inclusion are discussed in detail in ~\cite{glotfelter2017nonsmooth}. Proposition 2 in~\cite{glotfelter2017nonsmooth} demonstrates that~\eqref{eq:h UIS} is valid for non-differentiable points.  For the sake of brevity, we will skip this part. 

Consequently, $\overline{h}(x, \overline{\mathcal{P}}, \overline{\alpha})$ is a certified barrier function for the system described by~\eqref{eq:pwa dynamic}.
\end{proof}
\begin{remark}
\label{rem:UIS-extension}
Lemma~\ref{lemma:UIS} indicates that when multiple valid BFs exist, they can be combined to form a single BF using~\eqref{eq:h UIS}. It is shown here for $\PWA$, but this approach can also be applied to more general nonlinear dynamics on compact sets, as described in~\eqref{eq:general nl}, even though those broader applications are beyond the scope of this work.
\end{remark}
 
Building on Lemma~\ref{lemma:UIS}, we introduce the Union of Invariant Sets (UIS) method for constructing invariant sets. Instead of identifying a single Leaky ReLU function $\alpha(x)$ that maximizes the invariant set, UIS forms a unified invariant set by taking the union of multiple invariant sets, each obtained by solving~\eqref{eq:opt UIS} with a distinct linear $\alpha(x)$. This may result in a broader coverage of the state space while maintaining all interior points.  
The final invariant set is characterized by the barrier function~\eqref{eq:h UIS} and a Leaky ReLU $\alpha(x)$, integrating information from various valid BFs with linear $\mathcal{K}_\infty$ functions systematically. The complete procedure is detailed in Algorithm~\ref{alg:union of invariant sets}.
\begin{corollary}\label{cor:UIS}
Consider the set of parameters $\underline{\alpha} = \{\alpha_1, \alpha_2, \dots, \alpha_m\}$, and suppose $\alpha_k \in \underline{\alpha}$ yields the largest invariant set $S' = \Sj{k}$ through the optimization problem~\eqref{eq:opt UIS}. Then, the following holds for~$\overline{S}$,~\eqref{eq:union of of inv}: 
    \begin{equation}
        S' \subseteq \overline{S}.
    \end{equation}
\end{corollary}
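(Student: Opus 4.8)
The plan is to observe that Corollary~\ref{cor:UIS} is essentially immediate from the definition of $\overline{S}$ as a union, together with the hypothesis that $S' = \Sj{k}$ for some index $k \in \{1,2,\dots,m\}$. First I would note that by hypothesis $\alpha_k$ is one of the $m$ parameters in $\underline{\alpha}$, so the set $\Sj{k}$ is exactly one of the sets appearing in the union~\eqref{eq:union of of inv} that defines $\overline{S}$. Since $\overline{S} = \bigcup_{i=1}^m \Sj{i}$, any single set in this collection is contained in the union, i.e. $\Sj{k} \subseteq \bigcup_{i=1}^m \Sj{i} = \overline{S}$. Substituting $S' = \Sj{k}$ gives $S' \subseteq \overline{S}$ directly.

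To make the statement self-contained I would add one sentence connecting this back to Lemma~\ref{lemma:UIS}: part (1) of that lemma guarantees that $\overline{S}$ is itself an invariant set for~\eqref{eq:pwa dynamic}, and part (2) guarantees it is rendered asymptotically stable by the barrier function~\eqref{eq:h UIS} with the Leaky ReLU $\overline{\alpha}(x)$. So the corollary is not merely a trivial set-inclusion; its content is that the union construction never shrinks the best invariant set obtainable from a single linear $\alpha(x)$, while simultaneously yielding a certified (larger) invariant set at no extra computational cost. I would phrase the proof to emphasize that the inclusion may be strict whenever some $\Sj{i}$ with $i \neq k$ contains points not in $\Sj{k}$, which is the practical motivation for the UIS method.

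There is essentially no obstacle here — the only thing to be careful about is the logical dependency: the corollary relies on Lemma~\ref{lemma:UIS}(1) to ensure that $\overline{S}$ deserves to be called an invariant set in the first place, and on the hypothesis (stated in the corollary) that the maximizing parameter $\alpha_k$ actually belongs to the finite collection $\underline{\alpha}$ from which the union is formed. If instead $\alpha_k$ were obtained from a continuum or a refinement step not included in the union, the inclusion could fail; but under the stated hypotheses this cannot happen. Hence the proof is one line of set manipulation plus a citation of the preceding lemma, and no routine calculation is required.
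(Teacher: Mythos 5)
Your proof is correct and matches the paper's, which simply states that the result follows directly from the construction of the union in~\eqref{eq:union of of inv}: since $S' = \Sj{k}$ is one of the sets being unioned, the inclusion $S' \subseteq \overline{S}$ is immediate. Your additional remarks on the reliance on Lemma~\ref{lemma:UIS} for the invariance of $\overline{S}$ are accurate but not part of the paper's (one-line) argument.
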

\begin{proof}
    The result follows directly from the construction of the UIS algorithm.
\end{proof}
As a result of Corollary~\ref{cor:UIS}, it is important to note that UIS does not guarantee a larger forward invariant set in all cases; for instance, solutions such as $S(\mathcal{P}_4, \alpha_4)$ in Figure~\ref{fig:UIS process} do not contribute any new points to the UIS algorithm.
However, in comparison to the previous approach proposed in~\cite{samanipour2024invariant}, this method offers the advantage of incurring no additional computational expense, while simultaneously allowing for the determination of a potentially larger invariant set. 
\begin{algorithm}[tb]
    \begin{algorithmic}
    \REQUIRE  $\PWA(x)$ dynamic~\eqref{eq:pwa dynamic} and $\underline{\alpha}=\{\alpha_1,\dots,\alpha_m\}.$
    \FOR{$j=1,2,\dots,m$}
        \STATE{1-$\alpha(x)=\alpha_j\times x, \alpha_j\in \underline{\alpha}$}
        \STATE{2- Solve optimization problem~\eqref{eq:cost_function} for the $\PWA$ dynamic with $\alpha(x)$}
        \WHILE{$\sum_{i=1}^{N} \tau_{b_i} \neq 0$}  
            \FOR{each $i$ such that $\tau_{b_i} \neq 0$ and $\tau_{int_i} \neq 0$} 
                \STATE{Refine cells as described in~\cite{10313502}.}
            \ENDFOR 
            \STATE{Search for a certified invariant set using optimization problem~\eqref{eq:cost_function}
            for the $\PWA$ dynamics~\eqref{eq:pwa dynamic} with $\alpha(x)$ over the refined partition of the domain.}
        \ENDWHILE
    \ENDFOR
    \RETURN The Union of Invariant Sets~\eqref{eq:union of of inv} and its corresponding $\PWA$ BF~\eqref{eq:h UIS} with Leaky ReLU $\alpha(x)$. 
    \end{algorithmic}
    \caption{Union of Invariant sets(UIS)~\ref{sec:UIS} for $\PWA$ dynamics~\eqref{eq:pwa dynamic}}
    \label{alg:union of invariant sets}    
\end{algorithm}
\section{Results and Simulations}
All computations are implemented using Python 3.11 and on a computer with a 2.1 GHz processor and 8 GB RAM. A tolerance of $10^{-6}$ is used to determine if a number is nonzero. 
Moreover, the examples employ a tolerance of ${\epsilon_1=\epsilon_2=\epsilon_3=10^{-4}}$. 
\begin{example}[Inverted Pendulum~\cite{samanipour2024invariant}]\label{example:IP}
The inverted pendulum system can be modeled as follows:
\begin{align}
\dot{x}_1 &= x_2 \nonumber\\
\dot{x}_2 &= \sin(x_1) + u \nonumber
\end{align}
where $x_1$ represents the pendulum's angle, and $x_2$ is its angular velocity. The control input $u$, defined as $u = -3x_1 - 3x_2$, is constrained by a saturation limit between the lower bound of $-1.5$ and the upper bound of $1.5$.
The system's domain, $\Domain$, is given by $\Domain = \pi - ||x||_\infty \geq 0$, where $||x||_\infty$ represents the infinity norm. We utilize a ReLU neural network with a single hidden layer with eight neurons to approximate the right-hand side of the dynamics. The UIS estimates the invariant set with $\alpha=[0.025,0.05,0.06]$. The results are shown in Fig~\ref{fig:UIS}. The invariant set obtained with UIS is larger than the invariant set obtained with a linear $\alpha(x)$ as can be seen in Fig~\ref{fig:UIS}.
\end{example}
\begin{example}[Barrier Certificate Verification]
\label{comparison example}
Consider the following continuous system from~\cite{zhao2021synthesizing}:
\begin{align*}
    \dot{x}_1 &= x_1^2 + x_1 x_2 + x_1, \\
    \dot{x}_2 &= x_1 x_2 + x_2^2 + x_2,
\end{align*}
defined over the state space $\Domain = \{ x : \|x\|_\infty \le 2 \}$. The objective is to ensure that any trajectory that originates in the set $I_m = \{ x : 0.5 \leq x_1, x_2 \leq 1 \}$ never enters the unsafe region $U_m = \{ x : -1 \leq x_1, x_2 \leq 0 \}$.

The dynamic is identified using a ReLU NN with 20 neurons. UIS, as described in Algorithm~\ref{alg:union of invariant sets}, utilized $\alpha=[0.1,0.5]$ to obtain a new invariant set, and the results compared with~\cite{zhao2021synthesizing}. In Fig~\ref{fig:path-following}, UIS has a larger invariant set than one linear $\alpha$, and it is also comparable to the SyntheBC approach as described in~\cite{zhao2021synthesizing}. 
\end{example}
\section{Conclusion}
This paper presents a systematic method for deriving $\PWA$ BFs and their corresponding invariant sets for dynamical systems identified using ReLU NNs or their equivalent $\PWA$ dynamical systems. A key innovation of this method is the introduction of the leaky ReLU function as a simplified substitute to the complex $\mathcal{K}_\infty$ function typically used in BF formulations. 
As an extension of our previous work, we propose a novel approach, the Union of Invariant Sets (UIS), which takes advantage of all information obtained from previous methods to compute the largest possible $\PWA$ invariant set. The efficacy of the UIS framework has been demonstrated through a number of nontrivial examples. 
\begin{figure}
    \centering
    \includegraphics[width=0.5\linewidth]{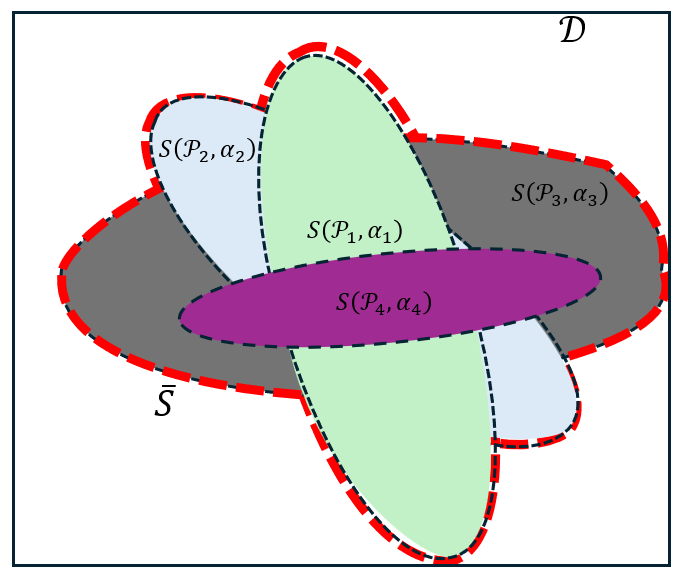}
    \caption{The new invariant set $\overline{S}$ shown with a red dashed line is the union of four invariant sets obtained from optimization problem~\eqref{eq:opt UIS} with different $\alpha$. As can be seen $S(\prtition_4,\alpha_4)$ does not contribute to the $\overline{S}$.}
    \label{fig:UIS process}
\end{figure}
\begin{figure}
    \centering
    \includegraphics[width=0.8\linewidth]{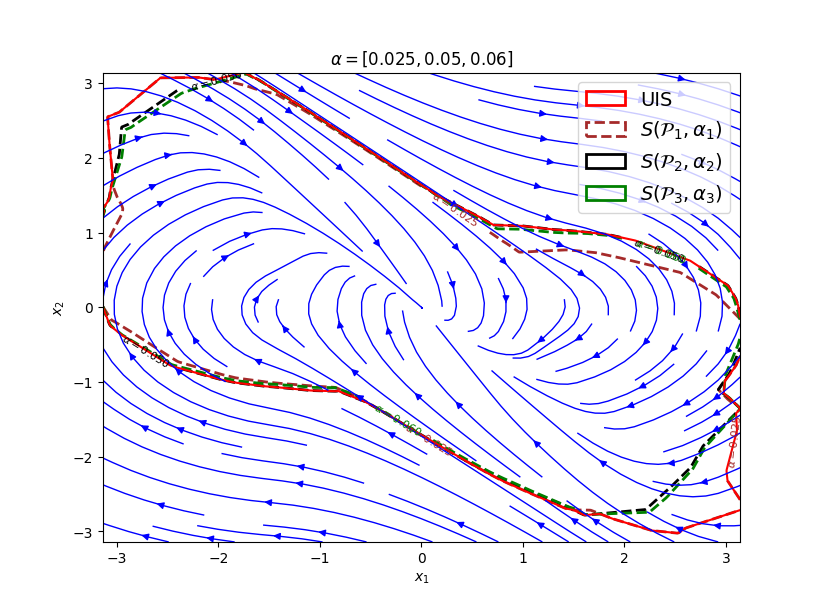}
    \caption{The final invariant set will be the Union of invariant sets with 3 different $\alpha$ as described in UIS~\ref{sec:UIS}.}
    \label{fig:UIS}
\end{figure}
\begin{figure}
    \centering
    \includegraphics[width=0.8\linewidth]{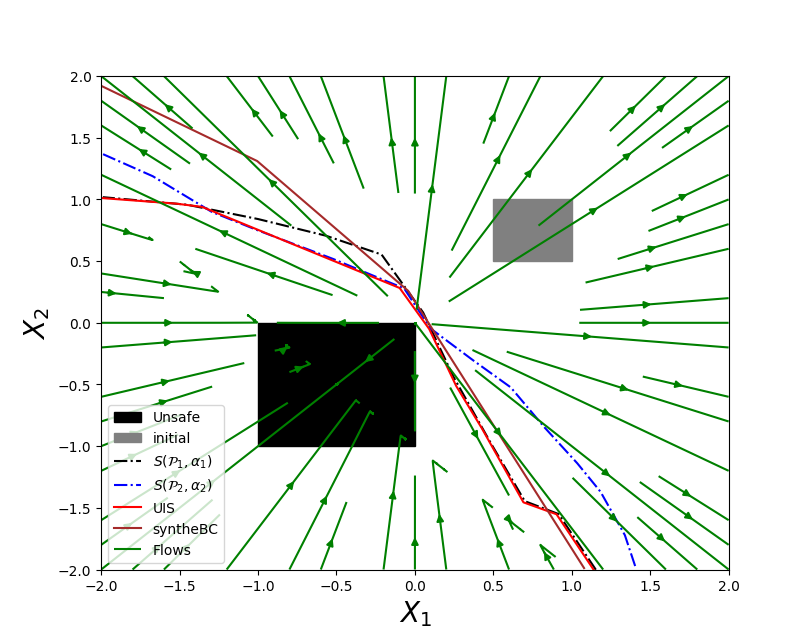}
    \caption{The final invariant set will be the Union of invariant sets with 2 different $\alpha$ as described in UIS~\ref{sec:UIS} for the Example~\ref{comparison example}.}
    \label{fig:path-following}
\end{figure}
\bibliographystyle{ieeetr}
\bibliography{acc2025}
\end{document}